\newtheorem{proposition}{Proposition}
\newtheorem{corollary}{Corollary}
\theoremstyle{definition}
\newcommand{\bra}[1]{\langle #1|}
\newcommand{\ket}[1]{| #1 \rangle }
\newcommand{\ip}[2]{{\langle #1|}{ #2 \rangle }}
\newcommand{\tr}[1]{{\rm tr}[#1]}
\newcommand{\be}{\begin{eqnarray}}
\newcommand{\ee}{\end{eqnarray}}
\newcommand{\cE}{{\cal E}}
\newcommand{\cG}{{\cal G}}
\newcommand{\cT}{{\cal T}}
\newcommand{\cC}{{\cal C}}
\newcommand{\cS}{{\cal S}}
\newcommand{\cH}{{\cal H}}
\begin{document}

\title{Entanglement sensitivity to signal attenuation and amplification}

\author{Sergey N. Filippov}

\affiliation{Moscow Institute of Physics and Technology,
Institutskii Per. 9, Dolgoprudny, Moscow Region 141700, Russia}

\affiliation{Institute of Physics and Technology, Russian Academy
of Sciences, Nakhimovskii Pr. 34, Moscow 117218, Russia}

\affiliation{Russian Quantum Center, Novaya 100, Skolkovo, Moscow
Region 143025, Russia}

\author{M\'{a}rio Ziman}

\affiliation{Institute of Physics, Slovak Academy of Sciences,
D\'{u}bravsk\'{a} cesta 9, Bratislava 84511, Slovakia}

\affiliation{Faculty of Informatics, Masaryk University,
Botanick\'{a} 68a, Brno 60200, Czech Republic}

\begin{abstract}
We analyze general laws of continuous-variable entanglement
dynamics during the deterministic attenuation and amplification of
the physical signal carrying the entanglement. These processes are
inevitably accompanied by noises, so we find fundamental
limitations on noise intensities that destroy entanglement of
gaussian and non-gaussian input states. The phase-insensitive
amplification $\Phi_1 \otimes \Phi_2 \otimes \ldots \Phi_N$ with
the power gain $\kappa_i \ge 2$ ($\approx 3$~dB, $i=1,\ldots,N$)
is shown to destroy entanglement of any $N$-mode gaussian state
even in the case of quantum limited performance. In contrast, we
demonstrate non-gaussian states with the energy of a few photons
such that their entanglement survives within a wide range of
noises beyond quantum limited performance for any degree of
attenuation or gain. We detect entanglement preservation
properties of the channel $\Phi_1 \otimes \Phi_2$, where each mode
is deterministically attenuated or amplified. Gaussian states of
high energy are shown to be robust to very asymmetric
attenuations, whereas non-gaussian states are at an advantage in
the case of symmetric attenuation and general amplification. If
$\Phi_1 = \Phi_2$, the total noise should not exceed
$\frac{1}{2}\sqrt{\kappa^2+1}$ to guarantee entanglement
preservation.
\end{abstract}

\pacs{03.67.Mn, 03.65.Ud, 03.65.Yz}

\maketitle

Creation, manipulation, and evolution of entangled states are in
the basis of many applications including quantum information
protocols \cite{horodecki-2009} and interferometry
\cite{giovannetti-lloyd-maccone-2011}. The physical implementation
of such applications raises an important problem of noisy
entanglement dynamics and robustness of entangled states
\cite{banaszek-2009}. The problem of continuous-variable
entanglement dynamics in different physical models of
system--environment interactions was considered in the papers
\cite{serafini-2004,manko-2005,adesso-illuminati-2007,an-2007,paz-2008,isar-2009,vasile-2009,galve-2010,argentieri-2011,barbosa-2011,buono-2012,marzolino-2013,rebon-2014,aolita-2014}.
The particular results depend on many aspects, namely, the
structure of composite system comprising (in)distinguishable
particles, the entanglement measure, the noise model, the initial
state, the interaction among particles of the system, and the form
of external driving. Such a variety of scenarios makes the full
characterization of entanglement dynamics hardly possible.
Moreover, the extent to which the evolved entanglement remains
useful depends on the particular quantum application. However, a
general entanglement-assisted application relies on the presence
of non-vanishing entanglement, an exceptional quantum property
regardless of its magnitude. Thus, the fundamental limitation on
the application performance is imposed by those noises that
completely destroy the entanglement of an input state.

In this paper, we analyze the limiting noises that accompany the
physical processes of deterministic signal attenuation and
amplification \cite{caves-1982}. The former one is a standard
model to describe losses in continuous-variable systems
\cite{barbosa-2011,buono-2012}, whereas the latter one is used in
so-called quantum cloning machines \cite{scarani-2005} and other
applications \cite{weedbrook-2012}. The limiting noises for such
operations were found in the one-sided scenario, i.e. for a
quantum channel of the form $\Phi_1\otimes{\rm Id}_2$, which
transforms any input state into a separable one
\cite{holevo-2008}. Such quantum channels $\Phi$ are known as
entanglement breaking ones \cite{holevo-1998,horodecki-2003}.
However, the attenuation or amplification does not have to be
one-sided. Our goal is to find parameters of the general channel
$\Phi_1 \otimes \Phi_2 \otimes \cdots$ that fundamentally restrict
the use of locally attenuated or amplified signals in
entanglement-assisted applications.

This problem was partially explored in the paper
\cite{sabapathy-2011}, which announced the existence of
non-gaussian states that are more robust to the action of
homogeneous two-mode amplification $\Phi\otimes\Phi$ than gaussian
ones (in contrast to the beforehand opposite conjecture
\cite{allegra-2010,adesso-2011,lee-2011,allegra-2011}). Our
results improve those of Ref.~\cite{sabapathy-2011} and provide
evidence that non-gaussian states of little energy can outperform
high-energy gaussian states also in the case of two-mode
attenuation. We extend our results to asymmetric channels and
multiple numbers of modes.

{\it Attenuators and amplifiers} are distinguished examples of
gaussian channels
\cite{holevo-werner-2001,caruso-2006,holevo-2007} that are usually
used to describe the deterministic lossy process and linear
amplification of bosonic quantum states. The bosonic quantum state
is defined by the density operator $\varrho$ or, equivalently, by
the characteristic function $\varphi({\bf z}) = \tr{\varrho W({\bf
z})}$, where $W({\bf z})=\exp[i(q_1 x_1 +p_1 y_1 + \cdots + q_N
x_N + p_N y_N)]$ is the Weyl operator, $N$ is the number of modes,
the operators $q_i$ and $p_j$ satisfy the canonical commutation
relation $[q_i,p_j]=i\delta_{ij}$, and ${\bf z} =
(x_1,y_1,\ldots,x_N,y_N)^{\top}$ corresponds to coordinates in the
real symplectic space $(\mathbb{R}^{2N},\bm{\Delta})$, with
$\bm{\Delta}$ being the symplectic form $\bm{\Delta} =
\bigoplus\limits_{i=1}^{N} \left(
\begin{array}{cc}
  0 & -1 \\
  1 & 0 \\
\end{array}
\right)$.

In terms of the characteristic functions, the gaussian channel
acts as follows:
\begin{equation}
\label{attenuator-amplifier-channel} \varphi_{\rm out}({\bf z}) =
\varphi_{\rm in}({\bf K}{\bf z}) \exp \left( -\frac{1}{2} {\bf
z}^{\top} {\bf M} {\bf z} \right).
\end{equation}

\noindent The one-mode gaussian channels are characterized in
\cite{holevo-2007}. Suppose ${\bf K} = \sqrt{\kappa} \left(
\begin{array}{cc}
  1 & 0 \\
  0 & 1 \\
\end{array}
\right)$ and ${\bf M} = \mu \left(
\begin{array}{cc}
  1 & 0 \\
  0 & 1 \\
\end{array}
\right)$, then the transformation
\eqref{attenuator-amplifier-channel} defines processes of one-mode
attenuation ($0<\kappa<1$), addition of classical noise
($\kappa=1$), and amplification ($\kappa>1$). These processes are
fair physical channels (completely positive maps) if the total
noise $\mu \ge \frac{1}{2}|\kappa-1|$ \cite{holevo-werner-2001}.
The minimal noise $\mu_{\rm QL}=\frac{1}{2}|\kappa-1|$ corresponds
to a so-called quantum limited operation, and the quantity $a =
\mu - \mu_{\rm QL} \ge 0$ is the extra noise. The one-mode
channels $\Phi(\kappa,\mu)$ altogether form a set $\cC$.

The action of channel $\Phi\in\cC$ takes a simple form in the
diagonal sum representation $\Phi[X] = \sum_{ij} A_{ij} X
A_{ij}^{\dag}$, where the explicit form of Kraus operators
$\{A_{ij}\}_{i,j=0,1,\ldots}$ in the Fock basis has been found for
all $\kappa$ and $a$ in the seminal paper \cite{ivan-2011}. To
work easily with the coherent states $\ket{\alpha}$, we derive the
representation $\Phi[\varrho] = \pi^{-2} \iint d^2 \alpha \, d^2
\beta \, \tilde{A}_{\alpha\beta} \varrho
\tilde{A}_{\alpha\beta}^{\dag}$, where
\begin{align}
\label{kraus} &\tilde{A}_{\alpha\beta} = \int \frac{d^2
\gamma}{\pi\sqrt{\tau}} \exp \bigg(
-\frac{|\alpha|^2+|\beta|^2+|\gamma|^2}{2} + \sqrt{1-\eta} \,
\alpha\gamma \nonumber\\
&+ \frac{1}{2\tau} \left| \sqrt{\tau - 1} \, \beta + \sqrt{\eta}
\, \gamma \right|^2  \bigg) \Big| \sqrt{\tfrac{\tau - 1}{\tau}} \,
\beta+\sqrt{\tfrac{\eta}{\tau}} \, \gamma \Big\rangle
\bra{\gamma}, \\
\label{k1-k2} & \qquad\quad \eta = \frac{\kappa}{\tau}, \qquad
\tau = \left\{
\begin{array}{ll}
  1+a, & 0<\kappa<1,\\
  \kappa+a, & \kappa>1.\\
\end{array}
\right.
\end{align}

\noindent The parameter $\eta$ defines the attenuation factor of
the quantum limited attenuation $\Phi_{\rm QL \eta}$ and $\tau$
defines the power gain of the quantum limited amplifier $\Phi_{\rm
QL \tau}$, the concatenation of these channels results in the
channel $\Phi(\kappa,\mu)$ given by
\eqref{attenuator-amplifier-channel}, i.e. $\Phi_{\rm QL \tau}
\circ \Phi_{\rm QL \eta} = \Phi(\kappa,\mu)$ \cite{ivan-2011}.

{\it Entanglement annihilation}. The phenomenon of complete
entanglement degradation is known as entanglement annihilation
\cite{moravcikova-ziman-2010} and was analyzed for discrete
variable systems in the papers
\cite{filippov-rybar-ziman-2012,filippov-ziman-2013,filippov-melnikov-ziman-2013}.
The density operator $\varrho$ acting on $\cH^{\otimes N}$ is
called mode-entangled (separable) if it cannot (can) be
represented as a convex sum $\sum_i p_i \varrho_i^{(1)} \otimes
\cdots \otimes \varrho_i^{(N)}$, where $p_i \ge 0$,
$\varrho_i^{(j)} \ge 0$. The channel $\Upsilon: \cT(\cH^{\otimes
N}) \mapsto \cT(\cH^{\otimes N})$ annihilates entanglement of some
input state $\varrho$, if the output state $\Upsilon[\varrho]$ is
separable. If this property holds true for all $\varrho$ from some
domain $\cS$, then $\Upsilon$ is called entanglement annihilating
on $\cS$.

The channel $\Phi^{\otimes N}$ is of a great interest in quantum
communication: The parts of a composite quantum state (encoded in
time bins or radiation modes) are sent through the same
communication line modeled by the channel $\Phi$
\cite{holevo-giovannetti}. If $\Upsilon=\Phi^{\otimes N}$ is
entanglement annihilating on $\cS$, then $\Phi$ is called
$N$-locally entanglement annihilating ($N\text{-LEA}_{\cS}$). We
will refer to the channel $\Phi$ as $\infty$-LEA if $\Phi \in
N$-LEA for all $N=2,3,\ldots$. If $\Phi$ is entanglement breaking,
then it is a measure and prepare operation
\cite{holevo-1998,horodecki-2003} that definitely disentangles the
part it acts on from all other parts of the multipartite system.
The inclusion diagram follows \cite{moravcikova-ziman-2010}:
\begin{equation}
\label{inclusion-general} \text{EB} \subset
\infty\text{-LEA}_{\cS} \subset \cdots \subset 3\text{-LEA}_{\cS}
\subset 2\text{-LEA}_{\cS}.
\end{equation}

\noindent Further, this relation will be specified for the
channels from class $\cC$.

{\it Gaussian input states} find applications in many quantum
information protocols \cite{weedbrook-2012}. The characteristic
function of a gaussian state $\varrho\in\cG$ reads $\varphi({\bf
z}) = \exp\left(-\frac{1}{2} {\bf z}^{\top} {\bf V} {\bf z} + i
{\bf l}^{\top} {\bf z} \right)$, where ${\bf V} \ge
\frac{i}{2}\bm{\Delta}$ is the covariance matrix and ${\bf l}$ is
the vector of average values of $q_i$, $p_i$, $i=1,\ldots,N$. The
vector ${\bf l}$ is irrelevant for entanglement properties, so we
let ${\bf l}=0$.

\begin{figure}
\includegraphics[width=8.5cm]{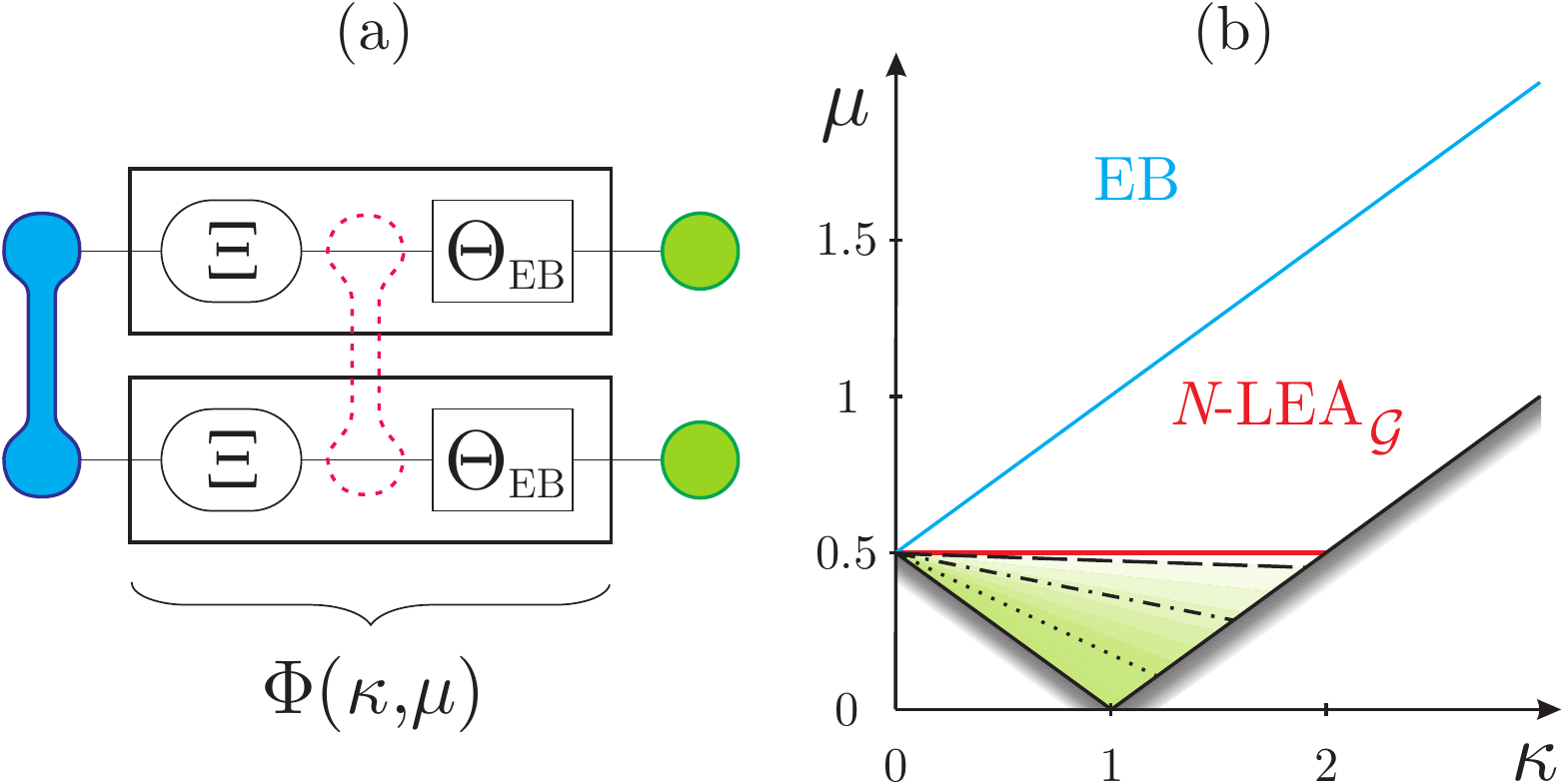}
\caption{\label{figure1} (Color online) (a) Decomposition of the
one-mode channel $\Phi(\kappa,\mu)$ into the scaling map $\Xi$
with $\kappa_{\Xi} \gg 1$ and $\mu_{\Xi}=0$ and the entanglement
breaking map $\Theta_{\rm EB}$. Action of the map $\Xi^{\otimes
N}$ on any gaussian input results in a valid gaussian state (red
dotted line), whose entanglement is then annihilated by the
entanglement breaking maps. (b) Map $\Phi(\kappa,\mu)$ is a valid
channel above gray shading. Channel $\Phi(\kappa,\mu)^{\otimes N}$
is entanglement annihilating for points $(\kappa,\mu)$ above the
horizontal red solid line for every $N=2,3,\ldots$. Entanglement
of the two-mode squeezed state with energy $\cE$ (measured in
photons) is preserved by the channel $\Phi(\kappa,\mu)^{\otimes
2}$ below the lines: dotted if $\cE=0.1$, dash-dotted if $\cE=1$,
and dashed if $\cE=10$.}
\end{figure}

\begin{proposition}
\label{proposition-1} The channel $\Phi(\kappa,\mu) \in \cC$ is
$N\text{-LEA}_{\cG}$ for all $N=2,3,\ldots$ if and only if the
total noise level $\mu \ge \frac{1}{2}$.
\end{proposition}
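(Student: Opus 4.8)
The plan is to prove the two implications separately, exploiting the fact that on gaussian inputs the map \eqref{attenuator-amplifier-channel} with ${\bf K}=\sqrt{\kappa}\,{\bf I}$, ${\bf M}=\mu{\bf I}$ acts on covariance matrices as ${\bf V}\mapsto\kappa{\bf V}+\mu{\bf I}$.

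For sufficiency ($\mu\ge\frac12\Rightarrow N\text{-LEA}_{\cG}$ for all $N$), I would realize $\Phi(\kappa,\mu)$ on gaussian inputs as the concatenation $\Phi(\kappa,\mu)=\Theta_{\rm EB}\circ\Xi$ of Fig.~\ref{figure1}(a), where $\Xi$ is the pure scaling ${\bf V}\mapsto\kappa_{\Xi}{\bf V}$ with gain $\kappa_{\Xi}\gg1$ and no added noise, and $\Theta_{\rm EB}=\Phi(\kappa/\kappa_{\Xi},\mu)$. Two facts drive the argument. First, $\Xi^{\otimes N}$ maps every gaussian state to a legitimate one: writing physicality as ${\bf V}-\frac{i}{2}\bm{\Delta}\ge0$, one has $\kappa_{\Xi}{\bf V}-\frac{i}{2}\bm{\Delta}=({\bf V}-\frac{i}{2}\bm{\Delta})+(\kappa_{\Xi}-1){\bf V}\ge0$ because ${\bf V}\ge0$ and $\kappa_{\Xi}\ge1$. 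Second, by the one-mode gaussian entanglement-breaking criterion \cite{holevo-2008}, $\Theta_{\rm EB}=\Phi(\kappa/\kappa_{\Xi},\mu)$ is entanglement breaking whenever $\mu\ge\frac12(1+\kappa/\kappa_{\Xi})$, which is attainable for every $\mu>\frac12$ by taking $\kappa_{\Xi}$ large enough. Since $\text{EB}\subset\infty\text{-LEA}$ by \eqref{inclusion-general}, the channel $\Theta_{\rm EB}^{\otimes N}$ annihilates the entanglement of the valid state $\Xi^{\otimes N}[\varrho]$, so $\Phi(\kappa,\mu)^{\otimes N}[\varrho]$ is separable for all $N$ and all gaussian $\varrho$.

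The borderline value $\mu=\frac12$ I would reach by continuity: for fixed gaussian $\varrho$ the output $\Phi(\kappa,\mu)^{\otimes N}[\varrho]$ depends continuously on $\mu$, and since the set of separable states is closed, separability for all $\mu>\frac12$ forces separability at $\mu=\frac12$ as well. For necessity I would argue by contrapositive: assuming $\mu<\frac12$, I exhibit a two-mode gaussian state whose entanglement survives $\Phi^{\otimes2}$, so that $\Phi$ already fails to be $2\text{-LEA}_{\cG}$ and hence cannot be $N\text{-LEA}_{\cG}$ for all $N$. The natural candidate is the two-mode squeezed vacuum of squeezing parameter $r$, whose output covariance matrix has diagonal blocks $(\frac{\kappa}{2}\cosh2r+\mu){\bf I}$ and off-diagonal blocks $\pm\frac{\kappa}{2}\sinh2r\,{\rm diag}(1,-1)$. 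Because the PPT condition is necessary and sufficient for separability of two-mode gaussian states, I would compute the smallest symplectic eigenvalue of the partially transposed matrix, finding $\tilde{\nu}_{-}=\mu+\frac{\kappa}{2}e^{-2r}$; the state is entangled precisely when $\tilde{\nu}_{-}<\frac12$. For $\mu<\frac12$ this holds once $r$ is sufficiently large (high energy), completing the argument and explaining why the threshold coincides for all $N\ge2$.

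The main obstacle is the boundary case $\mu=\frac12$ on the sufficiency side. The decomposition delivers a genuine entanglement-breaking $\Theta_{\rm EB}$ only for $\mu>\frac12$ strictly, since $\frac12(1+\kappa/\kappa_{\Xi})>\frac12$ for every finite $\kappa_{\Xi}$; the critical value is therefore not attained by any single decomposition and must be captured through the limit $\kappa_{\Xi}\to\infty$ combined with closedness of the separable set. Once this limiting step is in place, the necessity direction reduces to the routine PPT computation above.
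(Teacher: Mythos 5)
Your proof is correct and follows essentially the same route as the paper's: sufficiency via the decomposition $\Phi(\kappa,\mu)=\Theta_{\rm EB}\circ\Xi$ with large scaling gain $\kappa_{\Xi}$ and the entanglement-breaking threshold $\mu\ge\frac{1}{2}(1+\kappa/\kappa_{\Xi})$, and necessity via the two-mode squeezed vacuum with $r\to\infty$ checked by Simon's criterion. Your explicit handling of the boundary value $\mu=\frac{1}{2}$ (continuity of the output in $\mu$ together with closedness of the separable set) is a careful sharpening of the limit $\kappa_{\Theta}\to 0$, $\kappa_{\Xi}\to\infty$ that the paper invokes without comment.
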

\begin{proof}
Let us verify when the channel $\Phi(\kappa,\mu)$ can be
represented as a concatenation $\Theta_{\rm EB} \circ \Xi$ of the
scaling map $\Xi$ given by formula
\eqref{attenuator-amplifier-channel} with $\kappa_{\Xi} \gg 1$,
${\bf M} = \bm{0}$, and the entanglement breaking attenuator
$\Theta_{\rm EB}$ with $\kappa_{\Theta} \ll 1$ [Fig.
\ref{figure1}(a)]. The scaling map $\Xi$ is not positive in
general but it transforms any gaussian state into another gaussian
state because the transformed covariance matrix satisfies the
condition ${\bf V}_{\rm out} = \kappa_{\Xi} {\bf V}_{\rm in} \ge
\frac{i}{2}\bm{\Delta}$. The relation $\Phi(\kappa,\mu) =
\Theta_{\rm EB} \circ \Xi$ holds if $\kappa = \kappa_{\Xi}
\kappa_{\Theta}$ and $\mu = \mu_{\Theta} \ge
\frac{1}{2}(1+\kappa_{\Theta})$, the latter inequality being a
necessary and sufficient condition for the entanglement breaking
property of $\Theta$ \cite{holevo-2008}. In the limit
$\kappa_{\Theta} \rightarrow 0$ and $\kappa_{\Xi} \rightarrow
\infty$ with keeping $\kappa_{\Xi}\kappa_{\Theta} = \kappa= {\rm
const}$, we obtain $\mu \ge \frac{1}{2}$. Thus, the channel
$\Phi(\kappa,\mu)$ is a concatenation of the scaling map (positive
on gaussian inputs) and entanglement breaking map if $\mu \ge
\frac{1}{2}$. Those entanglement breaking maps make the output
state separable, which proves sufficiency.

If $\mu<\frac{1}{2}$ then $\Phi(\kappa,\mu)^{\otimes 2}$ preserves
entanglement of the two-mode squeezed vacuum state
$\ket{\psi}=\sqrt{1-\tanh^2 r}\sum_{n=0}^{\infty} (\tanh r)^n
\ket{n}\otimes\ket{n}$ when $r \rightarrow \infty$. This can be
checked, e.g., by Simon's criterion~\cite{simon-2000} applied to
the covariance matrix
\begin{equation}
\label{covariance-squeezed} {\bf V} = \frac{1}{2} \left(
\begin{array}{cccc}
  \cosh 2r & 0 & \sinh 2r & 0 \\
  0 & \cosh 2r & 0 & -\sinh 2r \\
  \sinh 2r & 0 & \cosh 2r & 0 \\
  0 & -\sinh 2r & 0 & \cosh 2r \\
\end{array}
\right).
\end{equation}

\noindent Thus, $\Phi(\kappa,\mu)$ is not $2\text{-LEA}_{\cG}$
and, consequently, not $N\text{-LEA}_{\cG}$. This proves the
necessity.
\end{proof}

We emphasize that Proposition~\ref{proposition-1} is valid for any
$N=2,3,\ldots$, which complements the previously known result for
$N=2$ \cite{sabapathy-2011} and is in agreement with the
attenuation experiment of Ref. \cite{buono-2012}.

$\Phi(\kappa,\mu)$ is entanglement breaking if and only if $a \ge
\min(\kappa,1)$ \cite{holevo-2008}. This means that
$\text{EB}^{\cC} \ne \infty\text{-LEA}_{\cG}^{\cC}$ and the
inclusion diagram \eqref{inclusion-general} takes the following
form for $\cC$-channels and gaussian inputs:
\begin{equation}
\text{EB}^{\cC} \subsetneq \infty\text{-LEA}_{\cG}^{\cC} = \cdots
= 3\text{-LEA}_{\cG}^{\cC} = 2\text{-LEA}_{\cG}^{\cC}.
\end{equation}

Gaussian state entanglement cannot survive the amplification with
$\kappa>2 \approx 3$~dB. Even if the total noise
$\mu<\frac{1}{2}$, the gaussian state should have enough energy to
protect its entanglement from annihilation. The two-mode squeezed
vacuum has energy $\cE = \cosh 2r - 1$ photons and its
entanglement is annihilated by $\Phi(\kappa,\mu)^{\otimes 2}$
unless $\mu < \frac{1}{2}\left[
1-\kappa+\kappa\left(\sqrt{\cE(2+\cE)}-\cE\right) \right]$ [see
Fig.~\ref{figure1}(b)].

The result of Proposition~\ref{proposition-1} can be extended to
the case of nonhomogeneous local channels.

\begin{corollary}
\label{corollary-1} The channel $\Upsilon=\Phi(\kappa_1,\mu_1)
\otimes \Phi(\kappa_2,\mu_2)\otimes \cdots \otimes
\Phi(\kappa_N,\mu_N)$ annihilates entanglement of all gaussian
$N$-mode states if $\mu_i \ge \frac{1}{2}$, $i=1,\ldots,N$.
\end{corollary}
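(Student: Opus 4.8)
The plan is to mimic the concatenation argument of Proposition~\ref{proposition-1}, applying it mode by mode, but with one essential modification: the auxiliary scaling maps must all share a \emph{common} gain so that their tensor product stays positive on gaussian inputs. Concretely, I would try to write each local factor as $\Phi(\kappa_i,\mu_i) = \Theta_{{\rm EB},i}\circ\Xi_i$, where $\Xi_i$ is the noiseless scaling map of Eq.~\eqref{attenuator-amplifier-channel} with ${\bf M}=\bm{0}$ and gain $\kappa_{\Xi}$ \emph{independent of} $i$, and $\Theta_{{\rm EB},i}$ is an entanglement breaking attenuator with $\kappa_{\Theta_i}=\kappa_i/\kappa_{\Xi}$ and noise $\mu_{\Theta_i}=\mu_i$. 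Since $\Xi$ adds no noise, the composition fixes $\kappa=\kappa_{\Xi}\kappa_{\Theta_i}$ and $\mu=\mu_i$, exactly as in the one-mode case.

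The first step is to choose $\kappa_{\Xi}$. The entanglement breaking condition $\mu_{\Theta_i}\ge\frac{1}{2}(1+\kappa_{\Theta_i})$ becomes $\mu_i\ge\frac{1}{2}\left(1+\kappa_i/\kappa_{\Xi}\right)$, whose right-hand side tends to $\frac{1}{2}$ as $\kappa_{\Xi}\to\infty$. Hence for $\mu_i>\frac{1}{2}$ a single large gain $\kappa_{\Xi}>\max_i \kappa_i/(2\mu_i-1)$ renders every $\Theta_{{\rm EB},i}$ entanglement breaking simultaneously, and also guarantees $\kappa_{\Theta_i}<1$, so each $\Theta_{{\rm EB},i}$ is a genuine attenuator. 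The second step is to verify that $\bigotimes_i\Xi_i$ keeps any gaussian input gaussian and legitimate. Because the common gain makes this map act on the covariance matrix as the global rescaling ${\bf V}\mapsto\kappa_{\Xi}{\bf V}$, its Williamson (symplectic) eigenvalues are all multiplied by $\kappa_{\Xi}\ge1$ and therefore stay $\ge\frac{1}{2}$, so ${\bf V}_{\rm out}\ge\frac{i}{2}\bm{\Delta}$.

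With these two facts in hand, the conclusion follows as in Proposition~\ref{proposition-1}: the channel $\Upsilon=\left(\bigotimes_i\Theta_{{\rm EB},i}\right)\circ\left(\bigotimes_i\Xi_i\right)$ sends every gaussian state first to another gaussian state and then, through a tensor product of measure-and-prepare maps, to a fully separable state, so $\Upsilon$ is entanglement annihilating on $\cG$. The boundary values $\mu_i=\frac{1}{2}$ would be recovered by continuity, since the set of separable states is closed and $\Phi(\kappa_i,\tfrac{1}{2})=\lim_{\epsilon\to0^+}\Phi(\kappa_i,\tfrac{1}{2}+\epsilon)$; equivalently one may strip off the excess noise $\mu_i-\frac{1}{2}$ as a classical-noise channel acting afterwards and reduce to the case $\mu_i=\frac{1}{2}$.

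I expect the only real obstacle to be the positivity of the intermediate scaling map. The naive mode-by-mode transcription of Proposition~\ref{proposition-1} would allow each $\Xi_i$ its own gain $\kappa_{\Xi_i}$, but a strongly \emph{asymmetric} rescaling of the covariance matrix can drive the smaller symplectic eigenvalue of a squeezed input below $\frac{1}{2}$ (for the two-mode squeezed state of Eq.~\eqref{covariance-squeezed}, scaling one mode while fixing the other pushes it toward $1/(2\cosh 2r)$), so $\bigotimes_i\Xi_i$ would no longer map gaussian states to valid states and the entanglement breaking argument would collapse. Insisting on a common gain $\kappa_{\Xi}$ is precisely what circumvents this difficulty, and once it is in place the remaining steps are routine.
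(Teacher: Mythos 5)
Your proposal is correct and follows essentially the same route as the paper: the paper's proof likewise decomposes $\Upsilon$ into a \emph{homogeneous} scaling map $\Xi^{\otimes N}$ followed by a tensor product of individual entanglement-breaking attenuators with $\kappa_{\Theta i}=\kappa_i/\kappa_{\Xi}$, absorbing any leftover noise as added classical noise. The only cosmetic difference is bookkeeping: the paper fixes the attenuators at the entanglement-breaking threshold and takes $\kappa_{\Xi}\to\infty$ to reach $\mu_i=\tfrac{1}{2}$, whereas you keep $\mu_{\Theta_i}=\mu_i$, use a finite common gain for $\mu_i>\tfrac{1}{2}$, and invoke closure of the separable set for the boundary --- your explicit explanation of why the gains must be common (asymmetric rescaling can violate ${\bf V}\ge\frac{i}{2}\bm{\Delta}$) is exactly the point implicit in the paper's use of a homogeneous $\Xi^{\otimes N}$.
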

\begin{proof} Similarly to the proof of
Proposition~\ref{proposition-1}, concatenation of the homogeneous
scaling map $\Xi^{\otimes N}$ and a map $\bigotimes_{i=1}^{N}
\Theta_{{\rm EB} i}$ composed of the individual
entanglement-breaking attenuators with $a_{\Theta i} =
\kappa_{\Theta i} = \kappa_i / \kappa_{\Xi}$ leads to the map
$\Phi(\kappa_1,\frac{1}{2}) \otimes
\Phi(\kappa_2,\frac{1}{2})\otimes \cdots \otimes
\Phi(\kappa_N,\frac{1}{2})$ in the limit $\kappa_{\Xi}\rightarrow
\infty$. The map $\Upsilon$ may be readily obtained by adding
classical noise $(\mu_i-\frac{1}{2})$ into $i$th mode,
$i=1,\ldots,N$.
\end{proof}

\begin{corollary}
\label{corollary-2} Suppose the channel
$\Upsilon=\Phi(\kappa_1,\mu_1) \otimes \Phi(\kappa_2,\mu_2)\otimes
\cdots \otimes \Phi(\kappa_N,\mu_N)$ such that
$\min\limits_{i=1,\ldots,N-1} \frac{2\mu_i-1}{\kappa_i} = s \ge
0$. The channel $\Upsilon$ annihilates entanglement of all
gaussian $N$-mode states if $\mu_N \ge \frac{1}{2}(1-s\kappa_N)$.
\end{corollary}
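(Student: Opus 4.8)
The plan is to generalize the concatenation argument of Proposition~\ref{proposition-1} and Corollary~\ref{corollary-1}, exploiting one extra structural observation: to render an $N$-mode state fully separable it suffices to break the entanglement on only $N-1$ of the modes, since once modes $1,\ldots,N-1$ are each factored out by local measure-and-prepare maps, the remaining mode $N$ stands alone and the global output is automatically a convex mixture of product states. I would therefore write $\Upsilon = \left( \bigotimes_{i=1}^{N-1}\Theta_{{\rm EB}\,i} \otimes \Phi_N' \right) \circ \Xi^{\otimes N}$, where $\Xi$ is the homogeneous noiseless scaling map with a common gain $\kappa_\Xi \ge 1$ (so that, exactly as in Proposition~\ref{proposition-1}, $\Xi^{\otimes N}$ sends every gaussian covariance matrix ${\bf V}\ge\frac{i}{2}\bm{\Delta}$ to the valid matrix $\kappa_\Xi{\bf V}\ge\frac{i}{2}\bm{\Delta}$), and where $\Phi_N'$ need only be a legitimate channel rather than an entanglement-breaking one.

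Next I would match parameters along each mode. The concatenation rule gives $\kappa_i=\kappa_\Xi\kappa_{\Theta i}$ and $\mu_i=\mu_{\Theta i}$, so $\Theta_{{\rm EB}\,i}$ carries $(\kappa_i/\kappa_\Xi,\mu_i)$ and $\Phi_N'$ carries $(\kappa_N/\kappa_\Xi,\mu_N)$. The entanglement-breaking criterion $a\ge\min(\kappa,1)$ of Ref.~\cite{holevo-2008} collapses, for both attenuating and amplifying one-mode channels, to the single inequality $\mu_{\Theta i}\ge\frac{1}{2}(1+\kappa_{\Theta i})$; hence the maps on modes $1,\ldots,N-1$ are entanglement breaking precisely when $\frac{2\mu_i-1}{\kappa_i}\ge\frac{1}{\kappa_\Xi}$ holds for every $i\le N-1$.

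The decisive step is the choice $\kappa_\Xi=1/s$. By the definition $s=\min_{i\le N-1}\frac{2\mu_i-1}{\kappa_i}$ this makes all the breaking conditions hold simultaneously, with equality attained on the minimising mode, while $s<1$ guarantees $\kappa_\Xi>1$ and hence gaussianity preservation. The residual map $\Phi_N'$ then has gain $\kappa_N/\kappa_\Xi=s\kappa_N$ and remains a valid channel exactly when $\mu_N\ge\frac{1}{2}|s\kappa_N-1|=\frac{1}{2}(1-s\kappa_N)$ in the relevant regime $s\kappa_N\le1$, which is the asserted bound. Feeding a gaussian input through $\Xi^{\otimes N}$ yields a genuine gaussian state, and the subsequent local entanglement-breaking maps on the first $N-1$ modes then deliver a fully separable output.

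The point needing care is the bookkeeping of boundary regimes rather than any hard estimate. When $s\ge1$ every mode $i\le N-1$ is already individually entanglement breaking and $\Upsilon$ annihilates entanglement for any admissible $\mu_N$, so the bound is trivial; when $s\kappa_N>1$ the stated threshold is negative and I would instead pick $\kappa_\Xi=\kappa_N$, which keeps all breaking conditions valid (since then $1/\kappa_\Xi<s$) and leaves $\Phi_N'=\Phi(1,\mu_N)$ trivially legitimate; and $s=0$ recovers Corollary~\ref{corollary-1} in the limit $\kappa_\Xi\to\infty$. Checking that these cases dovetail with the single construction $\kappa_\Xi=1/s$, and that the breaking condition really is gain-independent in the uniform form $\mu\ge\frac{1}{2}(1+\kappa)$, is where I would spend most of the attention.
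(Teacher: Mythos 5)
Your proposal is correct and follows essentially the same route as the paper's proof: the decomposition $\Upsilon = \bigl(\bigotimes_{i=1}^{N-1}\Theta_{{\rm EB}\,i}\otimes\Phi_N'\bigr)\circ\Xi^{\otimes N}$ with $\kappa_{\Xi}=s^{-1}$, the uniform entanglement-breaking criterion $\mu\ge\frac{1}{2}(1+\kappa)$, and the observation that breaking $N-1$ modes suffices for full separability, with the paper merely phrasing the residual mode-$N$ map as a quantum-limited attenuator plus added classical noise rather than a generic valid channel. Your explicit treatment of the corner case $s\kappa_N>1$ (choosing $\kappa_{\Xi}=\kappa_N$) is a small refinement the paper leaves implicit, but it does not change the substance of the argument.
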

\begin{proof}
If $s \ge 1$, then all the channels $\Phi(\kappa_1,\mu_1)$,
$\Phi(\kappa_2,\mu_2)$, $\ldots$, $\Phi(\kappa_{N-1},\mu_{N-1})$
are entanglement breaking and the statement becomes trivial. If
$0<s<1$, let us represent $\Upsilon$ as a concatenation of the
homogeneous scaling map $\Xi^{\otimes N}$ and a map
$\left(\bigotimes_{i=1}^{N-1} \Theta_{{\rm EB} i}\right)\otimes
\Theta_{{\rm QL} N}$ composed of $N-1$ entanglement breaking
attenuators and a quantum limited attenuation of the $N$th mode
[see Fig.~\ref{figure2}(a)]. In fact, put $\kappa_{\Xi} = s^{-1} >
1$ and $\kappa_{\Theta i} = s \kappa_i$, $i=1,\ldots,N-1$, then
the relation $a_{\Theta i} = \mu_i - \frac{1}{2}(1-s\kappa_i) \ge
\kappa_{\Theta i}$ makes $\Theta_i$ entanglement breaking for
$i=1,\ldots,N-1$, which guarantees separability of the output
state for all $N$-mode gaussian inputs. The application of the
quantum limited attenuator $\Theta_{{\rm QL} N}$ with
$\kappa_{\Theta N} = s \kappa_N$ results in the noise $\mu_N =
\frac{1}{2}(1-\kappa_{\Theta N}) = \frac{1}{2}(1-s\kappa_N)$.
Greater noises in $N$th mode can be realized by adding classical
noise. The case $s=0$ corresponds to the closure of the set of
separable states.
\end{proof}

\begin{figure}
\includegraphics[width=8.5cm]{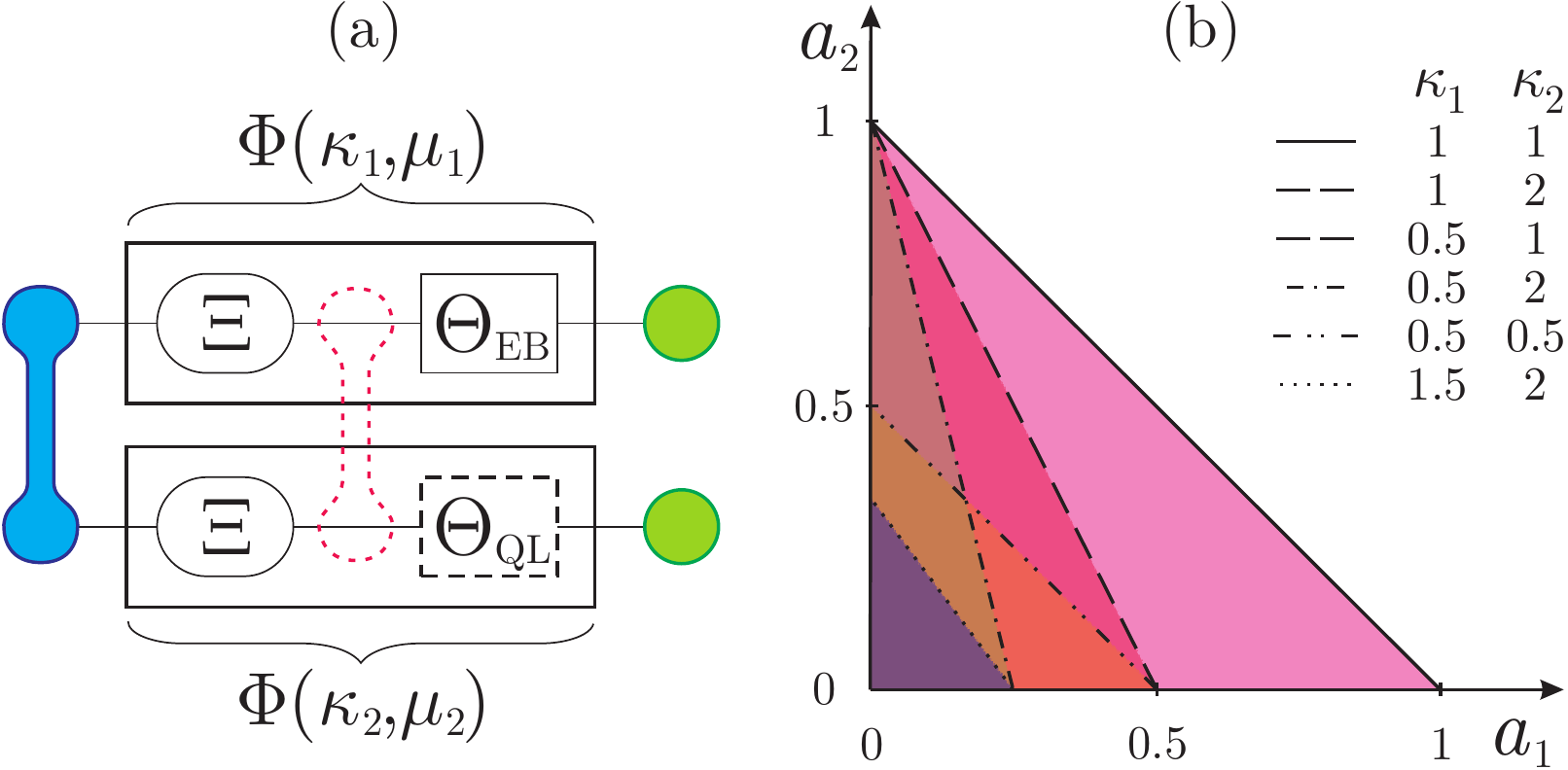}
\caption{\label{figure2} (Color online) (a) Decomposition of the
local channel $\Phi(\kappa_1,\mu_1)\otimes\Phi(\kappa_2,\mu_2)$.
Scaling map $\Xi^{\otimes 2}$ transforms any gaussian input into a
valid quantum state (red dotted line). Channel $\Theta_{\rm EB}$
is entanglement breaking, channel $\Theta_{\rm QL}$ is quantum
limited. (b) Regions of additional noises $a_1$, $a_2$ in the
channel $\Phi(\kappa_1,\mu_1)\otimes\Phi(\kappa_2,\mu_2)$, where
the entanglement of high energy gaussian states is preserved.}
\end{figure}

\begin{corollary}
\label{corollary-3} The channel $\Phi(\kappa_1,\mu_1) \otimes
\Phi(\kappa_2,\mu_2)$ annihilates entanglement of all two-mode
gaussian states if and only if $\kappa_1 \mu_2 + \kappa_2 \mu_1
\ge (\kappa_1+\kappa_2)/2$.
\end{corollary}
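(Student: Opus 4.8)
The plan is to obtain sufficiency directly from Corollary~\ref{corollary-2} and to establish necessity by exhibiting a two-mode squeezed vacuum whose entanglement survives. For the ``if'' direction I would observe that the stated inequality is exactly the hypothesis of Corollary~\ref{corollary-2} specialized to $N=2$: writing $s=(2\mu_1-1)/\kappa_1$ and clearing denominators in $\mu_2 \ge \frac{1}{2}(1-s\kappa_2)$ reproduces $\kappa_1\mu_2+\kappa_2\mu_1 \ge (\kappa_1+\kappa_2)/2$. The only caveat is the requirement $s\ge 0$, i.e. $\mu_1\ge\frac{1}{2}$. Here I would note that the symmetric inequality forces at least one of $\mu_1,\mu_2$ to be $\ge\frac{1}{2}$: were both strictly below $\frac{1}{2}$, the left-hand side would satisfy $\kappa_1\mu_2+\kappa_2\mu_1 < (\kappa_1+\kappa_2)/2$. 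Since the condition is invariant under interchanging the two modes, I may always place a mode with $\mu_i\ge\frac{1}{2}$ in the ``first'' slot of Corollary~\ref{corollary-2}, which disposes of sufficiency.

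For necessity I would track the two-mode squeezed vacuum with covariance matrix \eqref{covariance-squeezed} under $\Upsilon=\Phi(\kappa_1,\mu_1)\otimes\Phi(\kappa_2,\mu_2)$. Because each local map acts on characteristic functions via \eqref{attenuator-amplifier-channel} with ${\bf K}=\sqrt{\kappa_i}\,{\bf I}$ and ${\bf M}=\mu_i{\bf I}$, the output covariance matrix is ${\bf V}_{\rm out}={\bf K}{\bf V}{\bf K}+{\bf M}$, with ${\bf K}={\rm diag}(\sqrt{\kappa_1},\sqrt{\kappa_1},\sqrt{\kappa_2},\sqrt{\kappa_2})$ and ${\bf M}={\rm diag}(\mu_1,\mu_1,\mu_2,\mu_2)$. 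I would then invoke Simon's criterion~\cite{simon-2000}: the output is entangled precisely when the smallest symplectic eigenvalue $\tilde{\nu}_-$ of the partially transposed matrix (obtained by flipping the sign of $p_2$) drops below $\frac{1}{2}$. The computation simplifies because the off-diagonal block of ${\bf V}_{\rm out}$ is proportional to ${\rm diag}(1,-1)$, so ${\bf V}_{\rm out}$ block-diagonalizes into a $(q_1,q_2)$ sector and a $(p_1,p_2)$ sector.

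This yields $\det{\bf V}_{\rm out}=\left(A_1A_2-\frac{1}{4}g^2\right)^2$ with $A_i=\frac{1}{2}\kappa_i\cosh 2r+\mu_i$ and $g=\sqrt{\kappa_1\kappa_2}\sinh 2r$, together with the partial-transpose invariant $\tilde{\Delta}=A_1^2+A_2^2+\frac{1}{2}g^2$, whence $\tilde{\nu}_-^2=\frac{1}{2}\left(\tilde{\Delta}-\sqrt{\tilde{\Delta}^2-4\det{\bf V}_{\rm out}}\right)$. The main work is the asymptotic analysis as $r\to\infty$: using $\cosh^2 2r-\sinh^2 2r=1$ one finds $A_1A_2-\frac{1}{4}g^2\to\frac{1}{2}\cosh(2r)(\kappa_1\mu_2+\kappa_2\mu_1)$ and $\tilde{\Delta}\to\frac{1}{4}\cosh^2(2r)(\kappa_1+\kappa_2)^2$ to leading order, so that $\tilde{\nu}_-^2\to\det{\bf V}_{\rm out}/\tilde{\Delta}\to\left[(\kappa_1\mu_2+\kappa_2\mu_1)/(\kappa_1+\kappa_2)\right]^2$. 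Hence $\tilde{\nu}_-\to(\kappa_1\mu_2+\kappa_2\mu_1)/(\kappa_1+\kappa_2)$, which lies below $\frac{1}{2}$ exactly when the stated inequality fails; for sufficiently large $r$ the squeezed state then remains entangled and $\Upsilon$ is not entanglement annihilating on $\cG$, while the boundary case of equality is already covered by the sufficiency argument. The delicate point I expect to be the genuine obstacle is justifying the leading-order expansion of $\tilde{\nu}_-^2$ — verifying that the subleading $O(1)$ term in $A_1A_2-\frac{1}{4}g^2$ and the $O(\cosh 2r)$ corrections to $\tilde{\Delta}$ do not affect the limit — but this is routine once the block-diagonal structure is exploited.
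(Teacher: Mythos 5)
Your proposal is correct and takes essentially the same route as the paper's proof: sufficiency from Corollary~\ref{corollary-2} specialized to $N=2$, and necessity from Simon's criterion applied to the two-mode squeezed vacuum \eqref{covariance-squeezed} in the limit $r\to\infty$. The details you supply beyond the paper's two-sentence argument --- the relabeling of modes to secure $s\ge 0$ before invoking Corollary~\ref{corollary-2}, and the asymptotic evaluation $\tilde{\nu}_-\to(\kappa_1\mu_2+\kappa_2\mu_1)/(\kappa_1+\kappa_2)$ --- are accurate and simply make explicit what the paper leaves implicit.
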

\begin{proof}
The sufficiency follows from Corollary~\ref{corollary-2} applied
to the case $N=2$. The necessity follows from Simon's criterion
\cite{simon-2000} applied to the two-mode squeezed vacuum with the
covariance matrix \eqref{covariance-squeezed}, where $r
\rightarrow \infty$.
\end{proof}

Corollary~\ref{corollary-3} is followed by the observation: if
$\Phi(\kappa_1,\mu_1)$ is a quantum limited attenuator, then
$\Phi(\kappa_2,\mu_2)$ must be an entanglement breaking channel to
annihilate entanglement of all gaussian states. This fact is in
agreement with the experimental entanglement detection for a
two-mode squeezed state, one half of which is subjected to a near
quantum-limited amplification (i.e. $\Phi(\kappa_1,\mu_1)={\rm
Id}$ and $\Phi(\kappa_2,\mu_2)$ introduces a low noise)
\cite{pooser}. However, if $\Phi(\kappa_1,\mu_1)$ is a quantum
limited amplifier, that property does not hold anymore.
Fig.~\ref{figure2}(b) illustrates additional noises $a_1$ and
$a_2$ in the channel $\Phi(\kappa_1,\mu_1) \otimes
\Phi(\kappa_2,\mu_2)$ that can be tolerated by gaussian entangled
states.

{\it Non-gaussian input states}. Further, we demonstrate that the
entanglement of some low-energy non-gaussian states can be more
robust to attenuation and amplification than that of gaussian
ones.

\begin{proposition}
\label{proposition-2} The channel $\Phi(\kappa_1,\mu_1) \otimes
\Phi(\kappa_2,\mu_2)$ is not entanglement annihilating under the
following conditions:

\noindent {\rm (i)} $\kappa_1<1$, $\kappa_2<1$,
\[a_1<\frac{\kappa_1(1+a_2)}{2(1+a_2)-\kappa_2}, \qquad
a_2<\frac{\kappa_2(1+a_1)}{2(1+a_1)-\kappa_1};\]

\noindent  {\rm (ii)} $\kappa_1<1$, $\kappa_2\ge 1$,
\[a_1<\frac{\kappa_1(\kappa_2+a_2)}{\kappa_2+2a_2}, \qquad
a_2<1-\kappa_2 \frac{1+a_1-\kappa_1}{2(1+a_1)-\kappa_1}; \]

\noindent  {\rm (iii)} $\kappa_1 \ge 1$, $\kappa_2 \ge 1$,
\[a_1 < 1-\frac{\kappa_1 a_2}{\kappa_2+2a_2}, \qquad a_2 <
1-\frac{\kappa_2 a_1}{\kappa_1+2a_1}. \]

\end{proposition}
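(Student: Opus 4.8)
The plan is to prove non-annihilation \emph{constructively}: for parameters in each of the three regions I will exhibit one explicit low-energy non-gaussian entangled input $\varrho$ whose image $\Phi(\kappa_1,\mu_1)\otimes\Phi(\kappa_2,\mu_2)[\varrho]$ remains entangled, which immediately certifies that $\Upsilon$ is not entanglement annihilating. Guided by the paper's emphasis on states carrying only a few photons, I would take $\varrho$ supported on the lowest Fock levels of each mode, e.g. a pure state of the two-qubit form $\sqrt{1-p}\,\ket{0}\ket{0}+\sqrt{p}\,\ket{1}\ket{1}$ (mean energy below two photons), keeping the Schmidt weight $p$ as a free parameter to be optimized at the end, with the mode-swapped companion $\sqrt{1-p}\,\ket{0}\ket{1}+\sqrt{p}\,\ket{1}\ket{0}$ available for the exchanged constraint.

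The first key step is to exploit phase covariance of every channel in $\cC$: since $\Phi(\kappa,\mu)$ commutes with number-phase rotations, it sends $\ket{m}\bra{n}$ to a combination of $\ket{k}\bra{l}$ with $k-l=m-n$ only. Hence $\Upsilon[\varrho]$ is block diagonal with respect to the photon-number difference between the modes, and, because the input lives in the $\{\ket{0},\ket{1}\}^{\otimes 2}$ sector, the partial transpose of $\Upsilon[\varrho]$ has its single relevant negative eigenvalue confined to one $2\times 2$ block. Detecting entanglement via the Peres--Horodecki positive-partial-transpose test then reduces to the determinant condition $|\varrho^{\rm out}_{00,11}|^2 > \varrho^{\rm out}_{01,01}\,\varrho^{\rm out}_{10,10}$ (and, for the swapped state, the analogous block). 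This is a genuine sufficient condition for a negative partial transpose by eigenvalue interlacing of a principal submatrix.

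The second step expresses these few entries through single-mode data: the one-photon coherence transfer $c_\Phi=\bra{0}\Phi[\ket{0}\bra{1}]\ket{1}$ and the transition probabilities $p_{i|m}=\bra{i}\Phi[\ket{m}\bra{m}]\ket{i}$. Because the local channels act independently, $\varrho^{\rm out}_{00,11}$ factorizes as $\sqrt{p(1-p)}\,c_{\Phi_1}c_{\Phi_2}$, while the diagonal entries are sums of products of the $p_{i|m}$. I would then compute $c_\Phi$ and $p_{i|m}$ from the canonical decomposition $\Phi(\kappa,\mu)=\Phi_{\rm QL \tau}\circ\Phi_{\rm QL \eta}$ recorded in \eqref{k1-k2}: the quantum-limited attenuator acts trivially on $\ket{0}$ and mixes $\ket{1}$ with the vacuum, whereas the quantum-limited amplifier follows from two-mode squeezing with the idler traced out. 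Concatenating the two yields closed forms for $c_\Phi,\,p_{0|0},\,p_{1|0},\,p_{0|1},\,p_{1|1}$ as rational functions of $\kappa$ and $a$. The division into cases (i)--(iii) is dictated by the sign of each $\kappa_i-1$, which fixes the piecewise value of $\tau_i$ and $\eta_i$ in \eqref{k1-k2} and, in particular, decides whether the attenuating branch forbids photon creation ($p_{1|0}=0$); this changes which terms survive in the diagonal entries and hence the shape of the final bounds.

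The final step is to substitute the matrix elements into the determinant condition and minimize its right-hand side over the Schmidt weight $p$, turning the criterion into a threshold on the extra noises $a_1,a_2$; I expect the two displayed inequalities in each case to emerge as the two constraints (one per mode, interchanged under $1\leftrightarrow 2$) that must hold simultaneously for the chosen witness to stay negative. I anticipate two difficulties. The principal obstacle is algebraic: composing the attenuator and amplifier Fock matrix elements in the amplifying regime, where the output spreads over all photon numbers, and then simplifying the optimized determinant into the stated clean rational bounds requires careful bookkeeping and the right parametrization (working with $a$, $\eta$, $\tau$ rather than $\kappa$, $\mu$). The second is the choice of test state: one must verify that the low-energy ansatz above is strong enough to reach the claimed region, tuning the Schmidt weight (and, if necessary, enlarging the ansatz) so that the detection boundary coincides with the stated inequalities rather than a strictly smaller region.
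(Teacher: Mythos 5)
Your high-level plan---take one explicit few-photon entangled input and certify that its image under $\Phi(\kappa_1,\mu_1)\otimes\Phi(\kappa_2,\mu_2)$ is still entangled---is the same as the paper's, and your input ansatz is essentially the paper's too: the state $\propto\ket{\gamma}\ket{0}-\ket{0}\ket{\gamma}$ used in the paper reduces to your swapped two-qubit state $(\ket{1}\ket{0}-\ket{0}\ket{1})/\sqrt{2}$ in the limit $\gamma\to 0$. The gap is in the detection step, and it is fatal in two ways. First, your structural claim that the partial transpose of the output has its negativity ``confined to one $2\times 2$ block'' is false: the channels in $\cC$ populate all Fock levels (e.g. the quantum-limited amplifier sends $\ket{0}\bra{1}$ to $\tau^{-3/2}\sum_n\sqrt{n+1}\,[(\tau-1)/\tau]^n\ket{n}\bra{n+1}$), so the output is not a two-qubit state and its PT negativity is spread over an infinite family of blocks; your $2\times2$ principal-minor test remains a valid \emph{sufficient} criterion, but only that. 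Second, and decisively, this criterion is quantitatively too weak to reach the stated thresholds. Take $\kappa_1=\kappa_2=1$, $a_1=a_2=a$, so $\tau=1+a$, $\eta=1/(1+a)$; the single-mode data are $c_\Phi=\sqrt{\eta}/\tau^{3/2}$, $p_{0|0}=1/\tau$, $p_{1|0}=(\tau-1)/\tau^2$, $p_{0|1}=(1-\eta)/\tau$, $p_{1|1}=[\eta+(1-\eta)(\tau-1)]/\tau^2$. Substituting into your determinant condition and optimizing the Schmidt weight (the optimum is $p=1/2$ for the singlet ansatz by symmetry, and the $\ket{00},\ket{11}$ ansatz gives the same bound after its own optimization) yields $|c_\Phi|^4/4>p_{0|0}p_{0|1}p_{1|0}p_{1|1}$, i.e. $4a^2(1+a^2)<1$, i.e. $a\lesssim 0.455$. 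But case (iii) of the Proposition at $\kappa_1=\kappa_2=1$ asserts non-annihilation for all $a<1/\sqrt{2}\approx 0.707$ (this is exactly Corollary~\ref{corollary-4}). So your construction proves a strictly weaker statement, and no tuning of $p$ within the two-qubit ansatz can close the gap---the deficiency is not in the input state but in the certificate.

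The missing idea is the one the paper flags explicitly: near the boundary, the negativity of $\varrho_{\rm out}^{T_2}$ resides in coherences $\bra{nn}\varrho_{\rm out}^{T_2}\ket{mm}$ of arbitrarily high photon number, so the certificate must weight high Fock sectors with \emph{growing} weights. The witness of Eq.~(\ref{witness}) reads in the Fock basis $W_\lambda=\sum_{m,n}(1-\lambda)^{-(m+n+2)}\ket{mn}\bra{nm}$, hence $\tr{W_\lambda\varrho}\propto\bra{v}\varrho^{T_2}\ket{v}$ with $\ket{v}=\sum_n(1-\lambda)^{-n}\ket{nn}$: a quadratic form on the entire semi-infinite PT block that contains your $2\times2$ corner, with exponentially increasing weights once $\lambda>0$ (precisely the regime where $W_\lambda$ becomes unbounded, as the paper notes, and the regime needed to reach conditions (i)--(iii)). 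Your truncation to the corner discards exactly the high-Fock contributions that matter there. To repair the proof you would have to replace the lowest-block minor by such a weighted test (or the equivalent $\widetilde{W}_\lambda$ with $\lambda\to\tau/(\tau-1)$ mentioned after Corollary~\ref{corollary-4}) and optimize over the weight---which is, in substance, the paper's own proof.
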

\begin{proof}
It suffices to find a two-mode state $\ket{\psi}$ such that
$\left(\Phi(\kappa_1,\mu_1)\otimes\Phi(\kappa_2,\mu_2)\right)[\ket{\psi}\bra{\psi}]$
is entangled for parameters $\kappa_{1,2}$ and $a_{1,2}$
satisfying (i)--(iii). Let $\ket{\psi}=
[2(1-e^{-|\gamma|^2})]^{-1/2}(\ket{\gamma}\ket{0} -
\ket{0}\ket{\gamma})$, its energy
$\cE=(1-e^{-|\gamma|^2})^{-1}|\gamma|^2 \rightarrow 1$ when
$|\gamma| \rightarrow 0$. From a series of powerful entanglement
detection techniques
\cite{shchukin-2005,hillery-2006,sperling-2009,filippov-manko-2009,guhne-toth-2009,gabriel-2011,zhang-2013}
we choose \cite{zhang-2013} and modify it to obtain the following
witness:
\begin{equation}
\label{witness} W_{\lambda} = \int \frac{d^2\alpha}{\pi}
\frac{d^2\beta}{\pi} e^{\lambda(|\alpha|^2+|\beta|^2)}
\ket{\alpha}\bra{\beta} \otimes \ket{\beta}\bra{\alpha}.
\end{equation}

\noindent For all pure factorized states
$\ket{\xi}\otimes\ket{\upsilon}$ we have $\tr{W_{\lambda}
\ket{\xi}\bra{\xi} \otimes \ket{\upsilon}\bra{\upsilon}} = \left|
\int \frac{d^2\alpha}{\pi} e^{\lambda|\alpha|^2} \ip{\xi}{\alpha}
\ip{\alpha}{\upsilon} \right|^2 \ge 0$, whereas
$\tr{W_{\lambda}\varrho} < 0$ indicates entanglement of $\varrho$.
If $\lambda>0$, the operator $W_{\lambda}$ becomes unbounded but
its average with the output state can still be finite and negative
(indication of entanglement). A straightforward integration with
the Kraus operators \eqref{kraus} and the witness operator
\eqref{witness} yields
\begin{align}
& {\rm tr} \left\{ W_{\lambda}
\left(\Phi(\kappa_1,\mu_1)\otimes\Phi(\kappa_2,\mu_2)\right)[\ket{\psi}\bra{\psi}] \right\} \nonumber\\
& = \left[ \left(1-e^{-|\gamma|^2}\right) \left( \tau_1 \tau_2
(1-\lambda)^2 - (\tau_1 - 1)(\tau_2 - 1)
\right) \right]^{-1} \nonumber\\
& \times \bigg\{ \exp \left[ - \frac{\eta_1 \tau_1 \left(
1-\lambda(2-\lambda)\tau_2 \right) }{\tau_1 \tau_2 (1-\lambda)^2 -
(\tau_1 - 1)(\tau_2 - 1)}\, |\gamma|^2 \right] \nonumber\\
& + \exp \left[ - \frac{\eta_2 \tau_2 \left(
1-\lambda(2-\lambda)\tau_1 \right) }{\tau_1 \tau_2 (1-\lambda)^2 -
(\tau_1 - 1)(\tau_2 - 1)}\, |\gamma|^2 \right] \nonumber\\
& - 2 \exp \left[ - \left( 1 - \frac{\sqrt{\eta_1 \tau_1 \eta_2
\tau_2} \, (1-\lambda) }{\tau_1 \tau_2 (1-\lambda)^2 - (\tau_1 -
1)(\tau_2 - 1)} \right) |\gamma|^2 \right] \bigg\}, \nonumber
\end{align}

\noindent which is justified for $\lambda < \lambda_0 = 1 -
\sqrt{(\tau_1 - 1)(\tau_2 - 1)/\tau_1 \tau_2}$. The average value
obtained takes negative values in the widest region of parameters
$\eta_{1,2},\tau_{1,2}$ if $|\gamma| \rightarrow 0$. In this case,
the output state is entangled if there exists a solution
$\lambda_{\ast}$ of the inequality $2[\tau_1 \tau_2 (1-\lambda)^2
- (\tau_1 - 1)(\tau_2 - 1) - \sqrt{\eta_1 \tau_1 \eta_2 \tau_2} \,
(1-\lambda)]<\eta_1 \tau_1 + \eta_2 \tau_2
-\lambda(2-\lambda)\tau_1 \tau_2 (\eta_1 + \eta_2)$ such that
$\lambda_{\ast} < \lambda_0$. The reader will have no difficulty
in showing that such a solution $\lambda_{\ast}$ exists if
$2-\eta_1-\tau_2(2-\eta_1-\eta_2) > 0$ and
$2-\eta_2-\tau_1(2-\eta_1-\eta_2) > 0$. Substituting expressions
\eqref{k1-k2} for $\eta_{1,2}$ and $\tau_{1,2}$ yields formulas
(i)--(iii).
\end{proof}

\begin{figure}
\includegraphics[width=8.5cm]{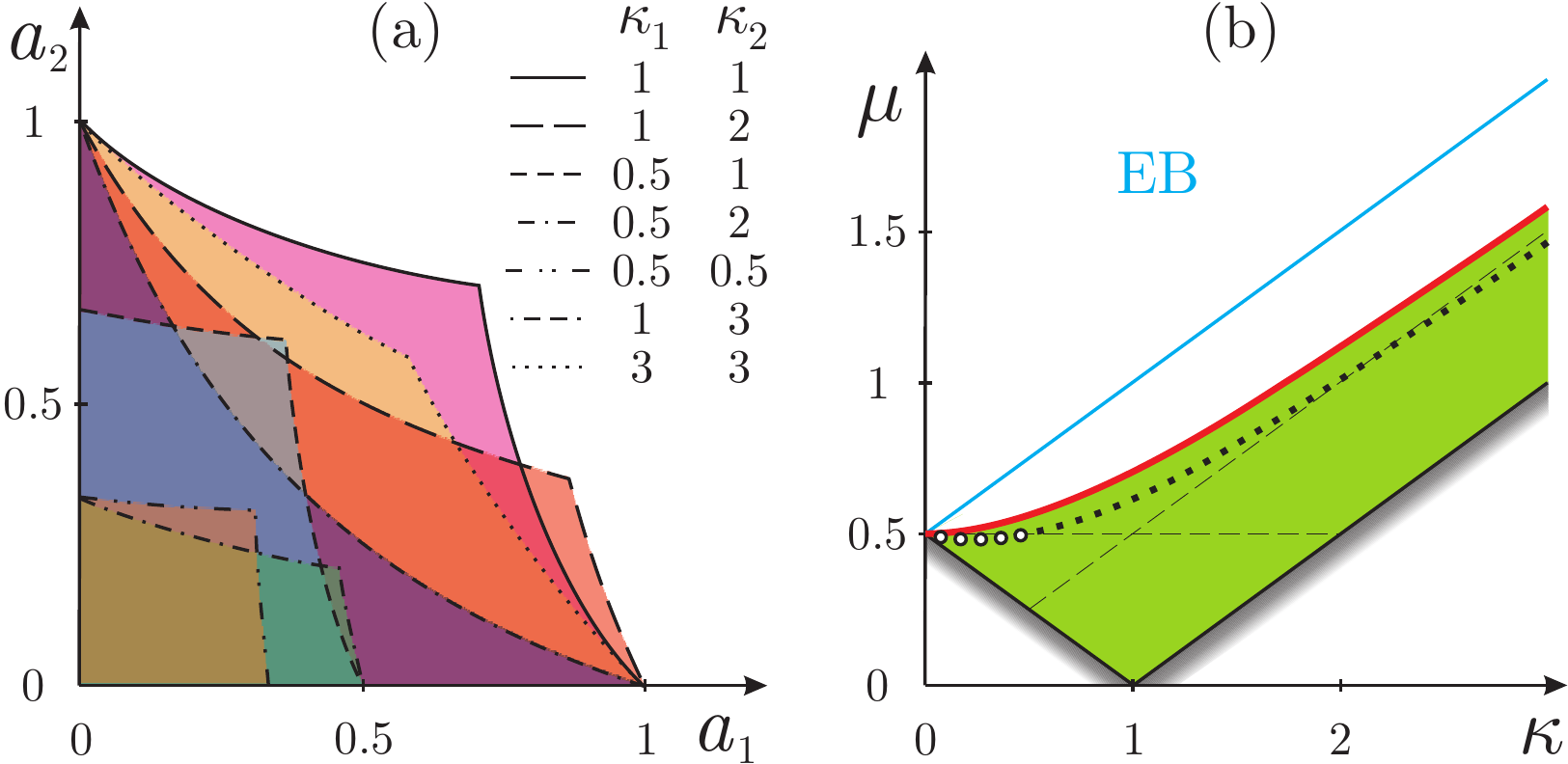}
\caption{\label{figure3} (Color online) (a) Regions of additional
noises $a_1$, $a_2$ in the channel
$\Phi(\kappa_1,\mu_1)\otimes\Phi(\kappa_2,\mu_2)$, where the
entanglement of low-energy state $\ket{\psi} \propto
\ket{\gamma}\ket{0} - \ket{0}\ket{\gamma}$ is preserved. (b)
Channel $\Phi(\kappa,\mu)^{\otimes 2}$ does not annihilate
entanglement of the state $\ket{\psi} \propto \ket{\gamma}\ket{0}
- \ket{0}\ket{\gamma}$ for points $(\kappa,\mu)$ below the red
solid line (dashed lines are asymptotes). Dotted line is the
result of Ref. \cite{sabapathy-2011} obtained for the state
$\frac{1}{\sqrt{2}}(\ket{n}\ket{0} + \ket{0}\ket{n})$, $n=5$, and
circles represent the points where gaussian states outperform that
result.}
\end{figure}

The result of Proposition~\ref{proposition-2} is depicted in
Fig.~\ref{figure3}(a). If $\Phi(\kappa_1,\mu_1)$ is a quantum
limited attenuator, then the entanglement of the non-gaussian
state $\ket{\psi} \propto \ket{\gamma}\ket{0} -
\ket{0}\ket{\gamma}$ is preserved in a narrower range of
parameters $\kappa_2,a_2$ than for the gaussian state with large
squeezing. Thus, gaussian state entanglement is favorable for
transmission through lossy channel with high asymmetry in noises.
On the contrary, if the losses are quite similar, then the
non-gaussian state $\ket{\psi}$ is at an advantage. As far as
amplifiers are concerned, non-gaussian states undoubtedly
outperform gaussian ones. Moreover, if $\Phi(\kappa_1,\mu_1)$ and
$\Phi(\kappa_2,\mu_2)$ are amplifiers with $|\kappa_1-\kappa_2|
\le 2$ and one of them is quantum limited, then the other has to
be entanglement breaking to destroy entanglement of the state
$\ket{\psi}$.

\begin{corollary}
\label{corollary-4} The channel $\Phi(\kappa,\mu) \in \cC$ is not
$N\text{-LEA}$ for any $N=2,3,\ldots$ if the total noise level
satisfies $\mu < \frac{1}{2} \sqrt{\kappa^2+1}$.
\end{corollary}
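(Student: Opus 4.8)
The plan is to reduce the $N$-mode assertion to the two-mode analysis already carried out in Proposition~\ref{proposition-2}, and then to read off the threshold by specializing that proposition to a homogeneous channel. The starting point is that failing to be $N$-LEA is a progressively weaker requirement as $N$ increases: by the inclusion \eqref{inclusion-general} one has $N\text{-LEA}\subseteq 2\text{-LEA}$, so it suffices to show that $\Phi(\kappa,\mu)$ is not $2$-LEA whenever $\mu<\frac{1}{2}\sqrt{\kappa^2+1}$. To make this explicit for arbitrary $N$, I would feed in the $N$-mode input $\ket{\psi}\otimes\ket{0}^{\otimes(N-2)}$, with $\ket{\psi}\propto\ket{\gamma}\ket{0}-\ket{0}\ket{\gamma}$ the state of Proposition~\ref{proposition-2}. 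Because the channel is homogeneous and factorizes, the output equals $(\Phi\otimes\Phi)[\ket{\psi}\bra{\psi}]$ tensored with $N-2$ copies of the fixed one-mode state $\Phi[\ket{0}\bra{0}]$; tracing out the last $N-2$ modes returns the $1|2$ marginal $(\Phi\otimes\Phi)[\ket{\psi}\bra{\psi}]$, so the full $N$-mode output cannot be fully separable once this marginal is entangled across the $1|2$ cut.

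It therefore remains to identify the parameter range in which $\Phi(\kappa,\mu)^{\otimes 2}$ preserves the entanglement of $\ket{\psi}$, which I would obtain by putting $\kappa_1=\kappa_2=\kappa$ and $a_1=a_2=a$ in Proposition~\ref{proposition-2}. In this homogeneous limit the two coupled inequalities in each of parts (i) and (iii) coincide and become a single condition. For attenuation $0<\kappa<1$ (part (i)) it reads $a<\kappa(1+a)/[2(1+a)-\kappa]$, i.e. $2a^2+2a(1-\kappa)-\kappa<0$, solved for $a\ge 0$ by $a<\frac{1}{2}\big(\kappa-1+\sqrt{\kappa^2+1}\big)$; adding the quantum-limited noise $\mu_{\rm QL}=\frac{1}{2}(1-\kappa)$ gives precisely $\mu<\frac{1}{2}\sqrt{\kappa^2+1}$. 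For amplification $\kappa\ge 1$ (part (iii)) the condition $a<1-\kappa a/(\kappa+2a)$ becomes $2a^2+2a(\kappa-1)-\kappa<0$, solved for $a\ge 0$ by $a<\frac{1}{2}\big(1-\kappa+\sqrt{\kappa^2+1}\big)$, and adding $\mu_{\rm QL}=\frac{1}{2}(\kappa-1)$ again yields $\mu<\frac{1}{2}\sqrt{\kappa^2+1}$. The two branches meet continuously at the classical-noise point $\kappa=1$, where the bound is $\mu<1/\sqrt{2}$, so a single expression governs both attenuators and amplifiers.

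I do not anticipate a genuine obstacle, since Proposition~\ref{proposition-2} supplies all the analytic content and the remaining work is confined to a short quadratic computation. The only points demanding care are the piecewise substitutions of $\mu_{\rm QL}=\frac{1}{2}|\kappa-1|$ and of $\tau$ from \eqref{k1-k2}, which differ between the two regimes, and the verification that the two distinct-looking inequalities in each part of Proposition~\ref{proposition-2} really do collapse to the same quadratic once $a_1=a_2$. Having established that $\Phi(\kappa,\mu)$ is not $2$-LEA in the stated range, the embedding of the first paragraph together with \eqref{inclusion-general} promotes this to ``not $N$-LEA'' for every $N=2,3,\ldots$, completing the argument.
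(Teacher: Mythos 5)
Your proof is correct and follows essentially the same route as the paper: specialize Proposition~\ref{proposition-2} to $\kappa_1=\kappa_2=\kappa$, $a_1=a_2=a$, solve the resulting quadratic to get $a<\frac{1}{2}\left(\sqrt{\kappa^2+1}-|\kappa-1|\right)$, hence $\mu<\frac{1}{2}\sqrt{\kappa^2+1}$, and then pass from not-$2$-LEA to not-$N$-LEA via the inclusion \eqref{inclusion-general}. Your vacuum-padding embedding simply makes explicit the inclusion argument that the paper invokes in one clause, so the two proofs are the same in substance.
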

\begin{proof}
Applying the result of Proposition~\ref{proposition-2} for
$\kappa_1=\kappa_2=\kappa$ and $a_1=a_2=a$ and solving the
corresponding inequality with respect to $a$ gives $a <
\frac{1}{2}(\sqrt{\kappa^2+1} - |\kappa-1|)$. The relation $\mu =
|\kappa-1|/2 + a$ leads to $\mu < \frac{1}{2} \sqrt{\kappa^2+1}$,
which indicates when $\Phi(\kappa,\mu)$ is not 2-LEA and,
consequently, not $N$-LEA.
\end{proof}

If $\mu < \frac{1}{2} \sqrt{\kappa^2+1}$, the state $\ket{\psi}
\propto \ket{\gamma}\ket{0} - \ket{0}\ket{\gamma}$ remains
entangled irrespective to the value of $\gamma \ne 0$, i.e. for
all energies $\cE\in(1,+\infty)$. This fact follows from the
expression ${\rm tr} \left\{ W_{\lambda} \Phi(\kappa,\mu)^{\otimes
2}[\ket{\psi}\bra{\psi}] \right\}$ containing the difference of
two exponents only. Note that Corollary~\ref{corollary-4} could be
proven by considering input states $\ket{\psi_n} =
\frac{1}{\sqrt{2}}(\ket{n}\ket{0}-\ket{0}\ket{n})$ and the witness
$\widetilde{W}_{\lambda} = \sum_{i,j=0}^{\infty} \lambda^{i+j}
\ket{i}\bra{j} \otimes \ket{j}\bra{i}$ when $\lambda$ tends to
$\tau/(\tau - 1)$. It is shown in Ref.~\cite{sabapathy-2011} that
the state $\frac{1}{\sqrt{2}}(\ket{n}\ket{0} + \ket{0}\ket{n})$
outperforms robustness of gaussian states with respect to
homogeneous attenuation and amplification
$\Phi(\kappa,\mu)\otimes\Phi(\kappa,\mu)$ for some region of
parameters $\kappa$ and $\mu$, however, there was no evidence that
these states outperform gaussian states for $\kappa<0.43$ even if
$n \rightarrow \infty$ [see Fig.~\ref{figure3}(b)]. We have just
used a stricter entanglement detection method and proved the
existence of non-gaussian states with little energy that
outperform gaussian states for all values of $\kappa > 0$.

To conclude, we have analyzed the noises that accompany
attenuation and amplification and impose fundamental limitations
on the performance of entanglement-assisted devices. The important
practical conclusion is that gaussian states of high energy are
quite robust to lossy channels with high asymmetry in the noises,
whereas non-gaussian states are more robust in the case of similar
attenuations. Gaussian state entanglement cannot withstand
amplification with power gain $2$ ($\approx 3$~dB), whereas
non-gaussian states of small energy can preserve the entanglement
for arbitrarily large power gains if the introduced noise is
sufficiently small.

{\it Acknowledgments}. The authors thank the referees for their
useful comments. S.N.F. is grateful to A.S. Holevo and
participants of seminar ``Quantum probability, statistics,
information'' at Steklov Mathematical Institute of the Russian
Academy of Sciences for fruitful discussions. S.N.F. acknowledges
partial support from the Russian Foundation for Basic Research
under Project No. 14-07-00937-a. M.Z. acknowledges support from
grants APVV-0808-12 (QIMABOS), RAQUEL (7FP STREP), and GACR
Project P202/12/1142.

\end{document}